\newlist{subquestion}{enumerate}{1}
\setlist[subquestion,1]{label=(\alph*)}
\newcommand{\ket}[1]{\ensuremath{\left|{#1}\right\rangle}}
\newcommand{\bra}[1]{\ensuremath{\left\langle{#1}\right |}}
\newcommand{\Tr}[1]{\textrm{Tr}\left[#1\right]}
\newtheorem{theorem}{Theorem}[section]
\newenvironment{proof}[1][Proof]{\begin{trivlist}
		\item[\hskip \labelsep {\bfseries #1}]}{\end{trivlist}}
\newcommand{\beq}{\begin{equation}}
\newcommand{\eeq}{\end{equation}}
\newcommand{\bse}{\begin{subequations}}
	\newcommand{\ese}{\end{subequations}}\newcommand{\bea}{\begin{eqnarray}}
\newcommand{\eea}{\end{eqnarray}}
\newcommand{\bit}{\begin{itemize}}
	\newcommand{\eit}{\end{itemize}}
\newcommand{\bpmatrix}{\begin{pmatrix}}
	\newcommand{\epmatrix}{\end{pmatrix}}
\newcommand{\be}{\begin{equation}}
\newcommand{\ee}{\end{equation}}
\newcommand{\ben}{\begin{eqnarray}}
\newcommand{\een}{\end{eqnarray}}
\begin{document}

\title[GHT and distinguishability notions]{Generalized Holevo theorem and distinguishability notions}

\author{D. G. Bussandri$^{1,2}$, P. W. Lamberti$^{1,2}$}
\affiliation{$^1$Facultad de Matem\'atica, Astronom\'{\i}a, F\'{\i}sica y Computaci\'on, Universidad Nacional de C\'ordoba, Av. Medina Allende s/n, Ciudad Universitaria, X5000HUA C\'ordoba, Argentina}
\affiliation{$^2$Consejo Nacional de Investigaciones Cient\'{i}ficas y T\'ecnicas de la Rep\'ublica Argentina, Av. Rivadavia 1917, C1033AAJ, CABA, Argentina}

\begin{abstract}
We present a generalization of the Holevo theorem by means of distances used in the definition of distinguishability of states, showing that each one leads to an alternative Holevo theorem. This result involves two quantities: the distance based Holevo quantity and the generalized accessible information. Additionally, we apply the new inequalities to qubits ensembles showing that for the Kolmogorov notion of distinguishability (for the case of an ensemble of two qubits) the generalized quantities are equal. On the other hand, by using a known example, we show that the Bhattacharyya notion captures not only the non-commutativity of the ensemble but also its purity. 
\keywords{Distinguishability measures \and Holevo Bound \and Quantum Communication \and Accessible information}
\end{abstract} 
\maketitle

\section{Introduction}\label{sec:Intro}

One of the fundamental results of quantum information theory is the so-called Holevo theorem, which establishes an upper bound for the accessible information (the amount of classical information which can be reliably encoded into a collection of quantum states) \cite{Holevo,Nielsen&Chuang,Jaeger,ZeroError}. 

A natural measure of ``quantumness'' for a quantum ensemble is the non-commutativity (measured by $\mathcal{N}_c$ \cite{Guo}) and it connects the accessible information $I_{\mathcal{S}_r}$ with the Holevo bound $\mathcal{X}_{\mathcal{S}_r}$. Specifically, if the elements of the ensemble commute then the non-commutativity is null and it holds $I_{\mathcal{S}_r}=\mathcal{X}_{\mathcal{S}_r}$, otherwise, the inequality is strict: $\mathcal{X}_{\mathcal{S}_r}>I_{\mathcal{S}_r}$ \cite{Tan}. Additionally, although the Holevo quantity can be achieved through measurements on a large number of copies, it is regularly not tight in the single-copy measurements case \cite{RuiHan}. Some authors have raised the question of whether the Holevo bound can be improved, by setting an alternative inequality  which has a better performance \cite{Zycowsky2010,Giovannetti}. To give a closed answer to the previous question exceeds the goal of this work; instead we focus on showing that the Holevo bound is a particular case of a more general ones. 

The proof of the Holevo bound uses the strong subaditivity of the quantum entropy, which can be established from the monotonicity of the quantum relative entropy. Further the Holevo quantity $\mathcal{X}_{S_r}$ can be expressed in terms of the generalized Jensen-Shannon divergence among the elements of the ensemble used in the communication protocol \cite{Majtey2005}, insinuating a close relation between the Holevo bound and the notion of distances between quantum states. Besides, in \cite{Tamir} a Holevo-type bound was obtained using the Hilbert-Schmidt distance measure and in \cite{Sharma2013,Wilde2014} was proposed and studied a \textit{generalized Holevo information} using a geometrical approach involving generalized divergences $\mathcal{D}(\cdot||\cdot)$ between quantum states. These facts motivated us to seek to generalize the Holevo theorem employing distance measures as mathematical objects with specific properties, obtaining new inequalities between the \textit{distance based Holevo quantity} (DBHQ) and the \textit{generalized accessible information} (GAI) [cf. Sec. \ref{sec:NewIneq}]. 

In quantum mechanics the states of a system are not observable. Therefore, the only way to distinguish two quantum states is through the measurements of physical quantities. Thus, a criterion to distinguish quantum states by using measurement procedures requires a \textit{measure of distinguishability} \cite{Fusch}. In the present work, we consider the main notions of distinguishability used in the field of quantum cryptography \cite{Fusch}, obtaining for each of them a version of the Holevo theorem [cf. Sec. \ref{sec:QuantDist2}] which involves new quantities (DBHQ and GAI) with different interpretations and behaviors.

In the particular case of qubits ensembles, we show that the DBHQ and GAI coincide for every ensemble of two elements for the Kolmogorov notion of distinguishability [cf. Sec. \ref{sec:QE}]. Besides, we have studied the behavior of the previous quantities for Bhattacharyya notion of distinguishability in a well known example [cf. Sec. \ref{sec:Example}] showing a richer behavior than $I_{\mathcal{S}_r}$ and $\mathcal{X}_{\mathcal{S}_r}$.

\section{Theoretical framework\label{sec:theory}}
This section is divided into two parts. In the first one we set the general communication framework while in the second we enunciate the properties of the different distances measures used. \par
We will always consider quantum systems associated with finite-dimensional Hilbert spaces. The states will be described by density matrices belonging to $\mathcal{B}_1^+(\mathcal{H})$ (namely, the set of bounded, positive-semidefinite operators with unit trace) where $\mathcal{H}$ denotes a Hilbert space.

\subsection{Communication scheme and related quantities} \label{sec:CommunSch}
Let us consider a quantum system $Q$ associated with the Hilbert space $\mathcal{H}_Q$. This system is shared by two entities, commonly referred to as \textit{Alice} and \textit{Bob}. In a communication scheme, the former has a classical information source $X=\{x_0,\dots,x_n\}$, $n\in\mathbb{N}$, and $p_i = \textrm{Prob}(X=x_i)$ the probability of occurrence of the value $x_i$. If at random it turns $X=x_i$ Alice prepares a quantum state $\rho_i$ belonging  to a fix ensemble $\{\rho_0,\dots,\rho_n\} \subset\mathcal{B}_1^+(\mathcal{H}_Q) $. The central aim is to communicate the result $X=x_i$ to the other part by mean of the states $\{\rho_i\}$. Accordingly, Bob makes a measurement over the system $Q$ described by the POVM (Positive-Operator Valued Measure) $\mathcal{M}=\{M_0,\dots,M_m\}$, $m\in\mathbb{N}$.\par 
The measurement results constitute a new random variable $Y^\mathcal{M}=\{y_0,\dots,y_m\}$ with probability of occurrence given by $q_j=\textrm{Tr}[M_j\rho]$ being $\rho=\sum_i p_i \rho_i$.\par
The conditional probability of obtaining the measurement result $Y^\mathcal{M}=y_j$ given that $X=x_i$ is $q_{j|i}=\textrm{Tr}[\sqrt{M_j}\rho_i\sqrt{M_j}]$; therefore, the joint probability of the variables $X$ and $Y^\mathcal{M}$ is given by $P(X=x_i , Y^\mathcal{M}=y_j)=P_{ij}=p_i q_{j|i}$. \par
Reasonably, the information that Bob gains about $X$ depends on the particular measurement $\mathcal{M}$ and it is quantified by the \textit{mutual information} $I(X,Y^\mathcal{M})$ \cite{Sasaki}:
\begin{eqnarray}\label{eq:I(x,y)}
I(X,Y^\mathcal{M})=H(X)+H(Y^\mathcal{M})-H(X,Y^\mathcal{M}),
\end{eqnarray}
being 
\begin{eqnarray*}
	H(X)=-\sum_{i=0}^n p_i\log p_i=H(p)\\
	H(Y^\mathcal{M})=-\sum_{j=0}^m q_j\log q_j=H(q) \\
	H(X,Y^\mathcal{M})=-\sum_{i,j=0}^{n,m} P_{ij}\log P_{ij}=H(P).
\end{eqnarray*}
The \textit{accessible information} $I_{\mathcal{S}_r}(X,Y)$ is defined as the maximum of $I(X,Y^\mathcal{M})$ over the possible measurements:
\begin{eqnarray}\label{eq:InfoAcc}
I_{\mathcal{S}_r}(X,Y)=\max_{\mathcal{M}} I(X,Y^\mathcal{M}).
\end{eqnarray}
The Holevo theorem \cite{Holevo,Nielsen&Chuang} establishes:
\begin{eqnarray*}
	I_{\mathcal{S}_r}(X,Y)\leq S(\rho)-\sum_i p_i S(\rho_i) = \mathcal{X}_{S_r},
\end{eqnarray*}
being $S(\rho)=-\textrm{Tr}[\rho \log \rho]$ the von Neumann entropy and $\mathcal{X}_{S_r}$ the \textit{Holevo bound} or \textit{Holevo information}.

\subsection{Distance Measures} \label{sec:QuantDist}

A central issue of this work is the concept of distance measure. Particularly, we are interested in the closeness of quantum states, namely, static measures \cite{Nielsen&Chuang}. There exist many distance measures used in different quantum information areas and frameworks. This variety generally is due to the arbitrariness in the distance measures definition \cite{Nielsen&Chuang,Fusch}. \par
A \textit{distance measure} $d(\cdot||\cdot)$ is a functional $d:\mathcal{B}^+_1(\mathcal{H}) \times \mathcal{B}^+_1(\mathcal{H}) \rightarrow \mathbb{R}_{\geq 0}$ with the following properties:
\begin{enumerate}[label=\alph*)]
	\item \label{d1} {\em Non-negativity}: For any $\rho,\sigma \in \mathcal{B}^+_1(\mathcal{H})$, $d(\rho||\sigma) \geq 0$ 
	\item \label{d2} {\em Identity of indiscernibles}: $d(\rho||\sigma) = 0$ if and only if $\rho=\sigma$  
\end{enumerate}
It is important to note that if the functional $d(\cdot||\cdot)$ satisfies also the symmetry property
\begin{enumerate}[label=\alph*),resume]
	\item \label{d3} {\em Symmetry}: For any $\rho,\sigma \in \mathcal{B}^+_1(\mathcal{H})$, $d(\rho||\sigma) = d(\sigma||\rho)$ 
\end{enumerate}
then sometimes it is called \textit{quantum distance}. \par
In this paper, we are interested in a set of distance measures with two additional properties which will allow us to establish the new inequalities [cf. Sec. \ref{sec:NewIneq}]. These requirements are:
\begin{enumerate}[label=\alph*),resume]
	\item \label{d4} $d(\cdot||\cdot)$ is non-increasing under the action of a completely positive trace-preserving map $\mathcal{E}(\cdot)$, 
	\begin{eqnarray*}
		d(\mathcal{E}(\rho)||\mathcal{E}(\sigma))\leq d(\rho||\sigma).
	\end{eqnarray*}
\end{enumerate}
It is important to note that if a distance measure fulfil the previous property then it satisfied the following requirement \cite{Wilde2014}: 
\begin{description}
	\item[Restricted Additivity:] \label{d5}
	$d(\rho_1 \otimes \sigma||\rho_2 \otimes \sigma) = d(\rho_1 || \rho_2)$ where $\rho_1,\rho_2 \in \mathcal{B}_1^+(\mathcal{H}_A)$, $\sigma \in \mathcal{B}_1^+(\mathcal{H}_B)$ and $\mathcal{H}=\mathcal{H}_A\otimes\mathcal{H}_B$.
\end{description}
As we shall see, if $d(\cdot||\cdot)$ fulfils the following requirement \cite{Bussandri19} then the new inequalities (see Sec. \ref{sec:NewIneq}) can be expressed in a clearer way.
\begin{enumerate}[label=\alph*),resume]
	\item  \label{d6} \textit{Additional Property:}
	\begin{eqnarray*}
		d( \ \sum_i p_i \left|i\right>\left<i\right| \otimes \rho_i \ || \ \sum_j p_j \left|j\right>\left<j\right| \otimes \rho ) = \sum_k p_k d(\rho_i || \rho),
	\end{eqnarray*}
	being $\{\left|i\right>\}$ an orthonormal basis of $\mathcal{H}_A$,  $\{p_i\}$ a probability distribution and $\{\rho_i\}$ elements of $\mathcal{B}_1^+(\mathcal{H}_B)$ such that $\sum_i  p_i \rho_i = \rho$. $\mathcal{H}_A$ and $\mathcal{H}_B$ are finite dimensional Hilbert spaces.
\end{enumerate}
Some examples of well-known distance measures fulfilling the previous conditions are the \textit{trace distance}, the \textit{squared Bures/Hellinger distance}, the \textit{quantum Jensen-Shannon divergence} (QJSD) and the \textit{relative entropy} \cite{Nielsen&Chuang,Majtey2005,Vedral2002,ZycowskyLibro}.

\section{New inequalities} \label{sec:NewIneq}

A well-known result about the mutual information $I$ between two random variables $X$ and $Y$ is the intrinsic connection with the (Shannon) relative entropy $H_r$. Specifically, in our communication context, we can rewrite \eqref{eq:I(x,y)} in the form:
\begin{eqnarray} \label{eq:distinMutalinfo}
I(X,Y^\mathcal{M})&=&H(p)+H(q)-H(P)= \nonumber \\
&=&\sum_{i,j=0}^{n,m} P_{ij} \log \frac{P_{ij}}{p_i q_j}=H_r(P||p\times q),
\end{eqnarray}
where $H_r(P||p\times q)$ is the (Shannon) relative entropy between the joint probability distribution $P=\{P_{ij}\}$ and his uncorrelated counterpart $p\times q=\{p_i q_i\}$. Also, $H_r$ is a measure of \textit{distance} between the former probabilities distributions \cite{Vedral2002}. Therefore, the equality \eqref{eq:distinMutalinfo} point out a connection between \textit{distinguishability} and the information shared by the random variables $X$ and $Y^\mathcal{M}$. The Holevo quantity can also be rewritten using the von Neumann relative entropy $S_r(\rho_0||\rho_1)=\textrm{Tr} \left[ \rho_0(\log_2 \rho_0 - \log_2 \rho_1)\right]$ \cite{Vedral2002}, resulting:
\begin{eqnarray*}
	\mathcal{X}_{S_r}=S(\rho)-\sum_i p_i S(\rho_i)=\sum_i p_i S_r(\rho_i||\rho),
\end{eqnarray*}
with $\rho=\sum_i p_i \rho_i$. Therefore, the Holevo theorem takes the following expression:
\begin{eqnarray*}
	H_r(P||p\times q) \leq \sum_i p_i S_r(\rho_i||\rho).
\end{eqnarray*}
In other words, the \textit{distinguishability} between the joint probability distribution $P$ and $p\times q$ is lower or equal than the mean \textit{distinguishability} between the states $\{\rho_i\}$, but here the distance measure used is solely the von Neumann relative entropy. We shall now generalize the previous inequality for general distance measures within a particular set. To do this, we will deal with two auxiliary Hilbert spaces  $\mathcal{H}_P$ and $\mathcal{H}_M$ with orthonormal basis $\{\ket{i_P}\}_{i=0}^n$ and $\{\ket{j_M}\}_{j=0}^m$, respectively. 
\begin{theorem}\label{th:theorem}
	Consider a \textit{distance measure} $d(\cdot||\cdot)$, verifying the properties \ref{d1}, \ref{d2}, \ref{d4} and \ref{d6} (see Sec. \ref{sec:QuantDist}). Then:
	\begin{eqnarray}\label{eq:result}
	D(P || p \times q) \leq \sum_i p_i d(\rho_i || \rho)\equiv\mathcal{X}_d,
	\end{eqnarray}
	being 
	\begin{eqnarray*}
		D(P || p \times q)\equiv d( \rho_{cc} ||\rho_{cc}^P \otimes \rho_{cc}^{M} ),
	\end{eqnarray*}
	where
	\begin{eqnarray*}
		\rho_{cc}  = \sum_{ij=0}^{nm} P_{ij}\ket{i_{P}} \bra{i_{P}} \otimes \ket{j_M} \bra{j_M}, \\
		\rho_{cc}^P \otimes \rho_{cc}^{M}  =  \sum_{kl=0}^{nm} p_{k}q_{l} \ket{k_{P}} \bra{k_{P}} \otimes \ket{l_M} \bra{l_M}.
	\end{eqnarray*}
	We will call to $\mathcal{X}_d$ the \textit{distance based Holevo quantity.}
\end{theorem}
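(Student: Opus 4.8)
The plan is to run the classical argument behind the standard Holevo bound with the von Neumann relative entropy replaced by the generic distance measure $d$; only hypotheses \ref{d4} (monotonicity under CPTP maps) and \ref{d6} will do real work, \ref{d1}--\ref{d2} merely guaranteeing that $d$ is a legitimate distance measure. First I would rewrite the right-hand side of \eqref{eq:result}: introduce the classical--quantum state on $\mathcal{H}_P\otimes\mathcal{H}_Q$,
$$
\rho_{PQ}=\sum_{i=0}^{n}p_i\,\ket{i_P}\bra{i_P}\otimes\rho_i ,
$$
with marginals $\rho_P=\sum_i p_i\ket{i_P}\bra{i_P}$ and $\rho_Q=\rho=\sum_i p_i\rho_i$. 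Since $\rho_P\otimes\rho=\sum_j p_j\ket{j_P}\bra{j_P}\otimes\rho$, property \ref{d6} applies literally and gives $d(\rho_{PQ}\,||\,\rho_P\otimes\rho)=\sum_i p_i\,d(\rho_i||\rho)=\mathcal{X}_d$.

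Next I would encode Bob's measurement as a quantum channel $\mathcal{E}:\mathcal{B}_1^+(\mathcal{H}_Q)\to\mathcal{B}_1^+(\mathcal{H}_M)$ defined by $\mathcal{E}(\sigma)=\sum_{j=0}^{m}\Tr{M_j\sigma}\,\ket{j_M}\bra{j_M}$, the measure-and-prepare channel associated with the POVM $\mathcal{M}$. One checks that it is completely positive and trace preserving by exhibiting Kraus operators: from an eigendecomposition $M_j=\sum_k\lambda_{j,k}\ket{e_{j,k}}\bra{e_{j,k}}$ take $A_{j,k}=\sqrt{\lambda_{j,k}}\,\ket{j_M}\bra{e_{j,k}}$, so that $\sum_{j,k}A_{j,k}^\dagger A_{j,k}=\sum_j M_j=\mathbb{1}$ is exactly the POVM completeness relation. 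Consequently $\mathcal{F}\equiv\mathrm{id}_P\otimes\mathcal{E}$ is also CPTP, from $\mathcal{B}_1^+(\mathcal{H}_P\otimes\mathcal{H}_Q)$ to $\mathcal{B}_1^+(\mathcal{H}_P\otimes\mathcal{H}_M)$.

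Finally, using $q_{j|i}=\Tr{M_j\rho_i}$ and $q_j=\Tr{M_j\rho}$, a direct computation gives $\mathcal{F}(\rho_{PQ})=\sum_{ij}p_i q_{j|i}\,\ket{i_P}\bra{i_P}\otimes\ket{j_M}\bra{j_M}=\rho_{cc}$ and $\mathcal{F}(\rho_P\otimes\rho)=\sum_{ij}p_i q_j\,\ket{i_P}\bra{i_P}\otimes\ket{j_M}\bra{j_M}=\rho_{cc}^P\otimes\rho_{cc}^M$; then property \ref{d4} applied to $\mathcal{F}$ yields
$$
D(P\,||\,p\times q)=d(\rho_{cc}\,||\,\rho_{cc}^P\otimes\rho_{cc}^M)=d\big(\mathcal{F}(\rho_{PQ})\,||\,\mathcal{F}(\rho_P\otimes\rho)\big)\le d(\rho_{PQ}\,||\,\rho_P\otimes\rho)=\mathcal{X}_d ,
$$
which is precisely \eqref{eq:result}. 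The step I expect to require the most care is the channel representation of the measurement: one must make sure that $\mathcal{E}$ is genuinely CPTP (so that \ref{d4} applies) and not merely positive, and that the diagonal states $\rho_{cc}$ and $\rho_{cc}^P\otimes\rho_{cc}^M$ come out with exactly the probabilities $P_{ij}=p_i q_{j|i}$ and $p_i q_j$. Once that is in place, the remainder is pure bookkeeping and invokes no property of $d$ beyond \ref{d4} and \ref{d6}.
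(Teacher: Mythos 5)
Your proof is correct and rests on the same two pillars as the paper's: property \ref{d6} to evaluate $d$ on the classical--quantum state $\sum_i p_i\ket{i_P}\bra{i_P}\otimes\rho_i$ versus its product counterpart (giving $\mathcal{X}_d$), and property \ref{d4} to push both states through a channel implementing Bob's measurement. The only difference is in how that channel is packaged. You compress everything into a single measure-and-prepare channel $\mathcal{E}(\sigma)=\sum_j\Tr{M_j\sigma}\ket{j_M}\bra{j_M}$ with explicit Kraus operators $A_{j,k}=\sqrt{\lambda_{j,k}}\ket{j_M}\bra{e_{j,k}}$, and your verification that $\sum_{j,k}A_{j,k}^\dagger A_{j,k}=\sum_j M_j=\mathbb{1}$ is exactly what is needed to make $\mathrm{id}_P\otimes\mathcal{E}$ CPTP. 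The paper instead decomposes the same operation into three stages: it attaches a register $M$ in the fixed state $\ket{0_M}\bra{0_M}$ (using restricted additivity, itself a consequence of \ref{d4}), applies a channel producing $\sum_{ij}p_i\ket{i_P}\bra{i_P}\otimes\sqrt{M_j}\rho_i\sqrt{M_j}\otimes\ket{j_M}\bra{j_M}$, and finally traces out $Q$. The end states and the inequality are identical; your one-step version is more economical and makes it transparent that nothing beyond \ref{d4} and \ref{d6} is used, while the paper's staged construction mirrors the textbook proof of the Holevo bound and keeps the intermediate post-measurement state visible. Either way the computation of the images, $\mathcal{F}(\rho_{PQ})=\rho_{cc}$ and $\mathcal{F}(\rho_P\otimes\rho)=\rho_{cc}^P\otimes\rho_{cc}^M$ with entries $p_iq_{j|i}$ and $p_iq_j$, is exactly right, so there is no gap.
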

\textit{Digression about notation:} For any reasonable distance measure $d( \rho_{cc} ||\rho_{cc}^P \otimes \rho_{cc}^{M} )$ constitutes a statistical distance (or divergence) between the probability distributions $P$ and $p \times q$ (on the grounds that the states $\rho_{cc}$ and $\rho_{cc}^P \otimes \rho_{cc}^{M}$ commute and are diagonal states in the orthonormal base $\{\ket{ij}_{PM}\}$ of $\mathcal{H}_P\otimes \mathcal{H}_M$). Therefore, we choose $D(P || p \times q)=d(\rho_{cc} || \rho_{cc}^P \otimes \rho_{cc}^{M})$ to emphasize that.  On the other hand, $D(P || p \times q)$ depends on the measurement $\mathcal{M}$, while $\mathcal{X}_d$ depends only on the ensemble $\{p_i,\rho_i\}$. Therefore, bearing in mind \eqref{eq:InfoAcc}, we choose:
\begin{eqnarray}\label{eq:InfoAccNews}
I_{d}(X,Y)\equiv \max_{\mathcal{M}} D(P||p\times q).
\end{eqnarray}
We will call to $I_{d}(X,Y)$ the \textit{generalized accessible information}. The interpretation of the previous quantity clearly hinges on the distance measure used.

\begin{proof}
	Consider now the communication scheme established in Sec. \ref{sec:CommunSch} and the Hilbert spaces $\mathcal{H}_P$ and $\mathcal{H}_M$ previously introduced.
	
	Let us connect each result of $X$ with an element of the orthonormal basis of $\mathcal{H}_P$, $\ket{i_P}$. On the other hand, such events are associated with a specific preparation, e.g. if $X=x_i$ then $\rho=\rho_i$ (the state of the shared system $Q$). These links are represented in the state:
	\begin{eqnarray}\label{eq:rhoC}
	\rho_c=\sum_{i=0}^n p_i \ket{i_P} \bra{i} \otimes \rho_i.
	\end{eqnarray}
	Furthermore, Bob makes a measurement represented by the POVM $\mathcal{M}$ and his results are stored on the subsystem $M$ under the following prescriptions: When no measurement was made, he chooses the state $\ket{0_M}$. On the contrary, if  Bob obtains the measurement result $Y^\mathcal{M}=y_j$ then he associates this with an element of the orthonormal basis of $\mathcal{H}_M$: $\ket{j_M}$. These two prescriptions have associated the states:: 
	\begin{eqnarray}\label{eq:rho0}
	\rho_0=\sum_{i=0}^n p_i \ket{i_P}\bra{i_P} \otimes \rho_i \otimes \ket{0_M}\bra{0_M},\\\label{eq:rhoM}
	\rho_\mathcal{M}=\sum_{i,j=0}^{n,m} p_i \ket{i_P}\bra{i_P} \otimes \sqrt{M_j}\rho_i\sqrt{M_j} \otimes \ket{j_M}\bra{j_M}.
	\end{eqnarray}
	The state $\rho_\mathcal{M}$ can be obtained from the application of a trace-preserving quantum operation $\mathcal{E}_{\mathcal{M}}$ i.e. $\mathcal{E}_{\mathcal{M}}(\rho_0)=\rho_\mathcal{M}$ \cite{Nielsen&Chuang}.
	
	The state that describes the information acquired by Bob about the variable $X$ is:
	\begin{eqnarray}\label{eq:rhoCC}
	\rho_{cc}=\sum_{i,j=0}^{n,m} P_{ij} \ket{i_P}\bra{i_P} \otimes \ket{j_M}\bra{j_M},
	\end{eqnarray}
	
	The uncorrelated counterparts of the states \eqref{eq:rhoC}, \eqref{eq:rho0}, \eqref{eq:rhoM} and \eqref{eq:rhoCC} respectively are:
	\begin{eqnarray}\label{eq:uncorrstate1}
	\rho_c^P \otimes \rho_c^{Q}=\sum_{i,j} p_ip_j \ket{i_P} \bra{i_P} \otimes \rho_j ,
	\end{eqnarray}
	\begin{eqnarray}\label{eq:uncorrstate2}
	\rho_0^P \otimes \rho_0^{QM}=\sum_{i,j} p_ip_j \ket{i_P} \bra{i_P} \otimes \rho_j \otimes \ket{0_M} \bra{0_M},
	\end{eqnarray}
	\begin{eqnarray}\label{eq:uncorrstate3}
	\rho_\mathcal{M}^P \otimes \rho_\mathcal{M}^{QM}&=&\mathcal{E}_{\mathcal{M}}(\rho_0^P \otimes \rho_0^{QM})=\\
	&=&\sum_{i,j,k} p_ip_k \ket{i_P} \bra{i_P} \otimes \sqrt{M_j}\rho_k\sqrt{M_j} \otimes \ket{j_M} \bra{j_M},
	\end{eqnarray}
	\begin{eqnarray}\label{eq:uncorrstate4}
	\rho_{cc}^P \otimes \rho_{cc}^{M}=\sum_{i,j=0}^{n,m} p_i q_j \ket{i_P}\bra{i_P} \otimes \ket{j_M}\bra{j_M}.
	\end{eqnarray}
	Let us consider now a distance measure $d(\cdot||\cdot)$ which fulfils the properties \ref{d1}, \ref{d2}, \ref{d4} and \ref{d6}. Then, due to the restricted additivity property (see Sec. \ref{d5}), it is direct to see that
	\begin{eqnarray*}
		d(\rho_c || \rho_c^P \otimes \rho_c^Q) =d(\rho_0 || \rho_0^P \otimes \rho_0^{QM})  ,
	\end{eqnarray*}
	By using \ref{d4}, we have
	\begin{eqnarray*}
		d(\rho_0 || \rho_0^P \otimes \rho_0^{QM})  \geq d(\rho_\mathcal{M} || \rho_\mathcal{M}^P \otimes \rho_\mathcal{M}^{QM}),
	\end{eqnarray*}
	and, taking into account that the trace operation over a subsystem is a trace preserving-quantum operation per se \cite{Nielsen&Chuang}, it follows:
	\begin{eqnarray*}
		d(\rho_\mathcal{M} || \rho_\mathcal{M}^P \otimes \rho_\mathcal{M}^{QM}) \geq d(\rho_{cc} || \rho_{cc}^P \otimes \rho_{cc}^{M}).
	\end{eqnarray*}
	Finally, combining \eqref{eq:uncorrstate1}-\eqref{eq:uncorrstate4} and using the additional property \ref{d6} we obtain:
	\begin{eqnarray*}
		d(\rho_c || \rho_c^P \otimes \rho_c^Q) = \sum_i p_i d(\rho_i || \rho) \geq d(\rho_{cc} || \rho_{cc}^P \otimes \rho_{cc}^{M}) = D(P || p \times q).
	\end{eqnarray*}
	$\blacksquare$
\end{proof}
It is important to remark that the Holevo theorem is recovered if we choose as distance measure the relative entropy $S_r(\cdot||\cdot)$.

On the other hand, it is easy to prove that if the states $\{\rho_i\}$ commutes between them, then for any reasonable distance measure the equality in \eqref{eq:result} is achieved. On the contrary, it depends on the distance measure considered if an equality in \eqref{eq:result} implies the commutation of the states $\{\rho_i\}$. This is a point that deserves to be remarked:  the implications of the inequality depend markedly on the distance used.
\section{Cryptographic distinguishability measures}\label{sec:QuantDist2}
As we have seen in Sec. \ref{sec:QuantDist}, a distance measure is a mathematical object with specific properties, but it is well known that not all of them can be considered as distinguishability measures. The reason is that the only physical way to distinguish two quantum states is through a measurement process. In quantum mechanics, the events are intrinsically stochastic and therefore the measurement results are associated with a probability distribution. In consequence, if one has a criterion for distinguishing two probability distributions then it is possible to obtain a \textit{measure of distinguishability} between two quantum states measuring and optimizing over the possible measurements \cite{Nielsen&Chuang,Fusch}.

The main purpose of this section is to use the new inequality \eqref{eq:result} and apply it to specific distinguishability notions.

Following \cite{Fusch}, we shall consider three different notions of distinguishability that are of interest to quantum cryptography: Kolmogorov distance (K), Probability of error (PE)  and Bhattacharyya coefficient (B). We will not include in this analysis the Shannon distinguishability mainly because of it does not have a closed expression in the quantum realm.\par
\vspace{0.5cm}
\noindent The Kolmogorov distance (K) between two probability distributions $p$ and $q$ is defined by
\begin{eqnarray}\label{eq:Kolmo}
K(p||q)=\frac{1}{2} \sum_i \left| p_i-q_i \right|.
\end{eqnarray}
On the other hand, the Kolmogorov distance between two density matrices $\rho$ and $\sigma$ results equal to the \textit{trace distance}:
\begin{eqnarray*}
	K(\rho||\sigma)=\frac{1}{2}\textrm{Tr}|\rho-\sigma|.
\end{eqnarray*}
Consequently, the corresponding inequality for the Kolmogorov notion of distinguishability is
\begin{eqnarray}\label{eq:resultK}
K(P || p \times q) \leq \sum_i p_i K(\rho_i || \rho)=\mathcal{X}_K.
\end{eqnarray}
It is noteworthy that even when the states in the set do not commute, it is possible to reach the equality in \eqref{eq:resultK}, as will be shown in section \eqref{sec:Example}, by analysing the case of an ensemble of qubits.  This a quite different behavior with respect to the conventional Holevo bound.
\par
\vspace{0.5cm}
\noindent The probability of error between two distributions $p$ and $q$ is
\begin{eqnarray*}
	PE(p||q)=\frac{1}{2}\sum_i \min \{ p_i,q_i \},
\end{eqnarray*}
and it is related to the Kolmogorov notion through:
\begin{eqnarray*}
	PE(p||q)=\frac{1}{2}-\frac{1}{2}K(p||q),
\end{eqnarray*}
therefore, the probability of error between quantum states is
\begin{eqnarray*}
	PE(\rho||\sigma)=\frac{1}{2}-\frac{1}{2}K(\rho||\sigma).
\end{eqnarray*}
Then, we have an alternative interpretation of the inequality \eqref{eq:resultK} within the probability of error notion:
\begin{eqnarray}\label{eq:resultPE}
PE(P || p \times q) \geq \sum_i p_i PE(\rho_i || \rho)=\mathcal{X}_{PE}.
\end{eqnarray}
Finally, it remains to consider Bhattacharyya coefficient between two distributions $p$ and $q$ is defined by
\begin{eqnarray*}
	B(p||q)=\sum_i \sqrt{p_i q_i},
\end{eqnarray*}
wich in extended in the quantum realm as
\begin{eqnarray*}
	B(\rho||\sigma)=\textrm{Tr}\left(\sqrt{\sqrt{\sigma}\rho\sqrt{\sigma}}\right)=\sqrt{F(\rho||\sigma)},
\end{eqnarray*}
being $F(\rho||\sigma)$ the quantum fidelity \cite{ZycowskyLibro,Fusch}. At the same time, this coefficient is related to the squared \textit{Bures/Hellinger distance}:
\begin{eqnarray*}
	d^2_B(\rho||\sigma)=2\left[1-B(\rho||\sigma)\right].
\end{eqnarray*}
The Bures distance satisfies the conditions established in Theorem \ref{th:theorem}, driving to an inequality::
\begin{eqnarray}\label{eq:resultB}
B(P || p \times q) \geq \sum_i p_i B(\rho_i || \rho)=\mathcal{X}_{B}.
\end{eqnarray}
Bearing in mind that $F(\rho||\sigma)\geq\textrm{Tr}(\rho\sigma)$ it is easy to see that
\begin{eqnarray}\label{eq:resultB2}
B(P || p \times q) \geq \gamma[\rho],
\end{eqnarray}
with $\gamma[\rho]=\textrm{Tr}[\rho^2]$ the \textit{purity} of the state which is (a priori) delivered to Bob.
\section{Qubits ensembles} \label{sec:QE}

Now, we shall consider an ensembles of states of qubits, i.e., $\dim{\mathcal{H}_Q}=2$ [cf. Sec. \ref{sec:CommunSch}] and we shall restrict our calculations to von Neumann measurements. Our purpose here is to identify the difference between usual quantities given by the \textit{relative entropy}, i.e. $I_{\mathcal{S}_r}$ and $\mathcal{X}_{\mathcal{S}_r}$, and the generalized quantities DBHQ and GAI. The von Neumann measurements have an advantage mainly because of their Bloch representation.\par
Let $V$ be a unitary operator given by
\begin{eqnarray*}
	V=\vec{s}\cdot (\mathbb{I},i\vec{\sigma}),
\end{eqnarray*}
with $\vec{s}\in \Gamma$, $\Gamma =\{\vec{s}\in\mathbb{R}^4 \ / \ s_0^2+s_1^2+s_2^2+s_3^2=1\}$ and $\vec{\sigma}=(\sigma_1,\sigma_2,\sigma_3)$ the Pauli matrices. The Bob measurements over the system $\mathcal{H}_Q$ are given by
\begin{eqnarray}\label{eq:measurementQubit}
\mathcal{E}=\{E_j\}_{j=0}^1,
\end{eqnarray} 
being $E_j=V^*\ket{j}\bra{j}V$, and $\{\ket{j}\}_{j=0}^1$ the computational basis \cite{Luo08b}.\par
So the  Bloch vector of the events $E_j$ are given by $\textrm{Tr}(E_j \vec{\sigma})=(-1)^j\vec{z}$ being $\vec{z}=(z_1(\vec{s}),z_2(\vec{s}),z_3(\vec{s}))$ where
\begin{eqnarray}
z_1(\vec{s})=2(-s_0s_2+s_1s_3)\label{eq:z1},\\
z_2(\vec{s})=2(s_0s_1+s_2s_3)\label{eq:z2},\\
z_3(\vec{s})=s_0^2+s_3^2-s_1^2-s_2^2,\label{eq:z3}
\end{eqnarray}
Thus,
\begin{eqnarray}
E_{j}=\frac{1}{2}\left[\mathbb{I}+ (-1)^j\vec{z}\cdot\vec{\sigma} \right]. \label{eq:BlochVecE}
\end{eqnarray}
The direction given by $\vec{z}$ characterizes the measurement $\mathcal{E}$. It is easy to see that $\vec{z}$ overspreads the Bloch sphere, i.e., it is possible to measure in any direction.

Now, if the states of the ensemble are given by
\begin{eqnarray}
\rho_{i}=\frac{1}{2}\left(\mathbb{I}+ \vec{\beta}_i\cdot\vec{\sigma} \right), \label{eq:BlochVec} \\
\rho=\frac{1}{2}\left(\mathbb{I}+ \vec{\beta}_m\cdot\vec{\sigma} \right), \label{eq:BlochVecm}
\end{eqnarray}
where $\rho=\sum_i p_i \rho_i$ and therefore $\vec{\beta}_m=\sum_i p_i \vec{\beta}_i$, it can be seen that the probabilities $P_{ij}$ and $p_i q_j$ [cf. Sec. \ref{sec:CommunSch}] are given by:
\begin{eqnarray*}
	P_{ij}=\frac{p_i}{2}\left[1+(-1)^j \vec{\beta}_i \cdot \vec{z}\right],  \\
	p_iq_j=\frac{p_i}{2}\left[1+(-1)^j \vec{\beta}_m \cdot \vec{z}\right].
\end{eqnarray*}

Given an ensemble of operators $\{\rho_1, \rho_2,\dots,\rho_n\}$, with probability $p_i$ each of ones, the non-commutativity $\mathcal{N}_c$ is defined as \cite{Guo}: 
\begin{eqnarray*}
	\mathcal{N}_c&=\frac{1}{2} \sum_{k,l=0}^{n} \begin{Vmatrix}
		[p_k\rho_k , p_l \rho_l ]
	\end{Vmatrix}_2,
\end{eqnarray*}
where $||A||_2=\sqrt{\Tr{A^\dagger A}}$ is the Hilbert-Schmidt norm. After some algebra, we have:
\begin{eqnarray} \label{eq:NC}
\mathcal{N}_c= \sum_{k,l=0}^{n}\frac{p_k p_l}{2\sqrt{2}}  \sqrt{\left|\vec{\beta}_k\right|\left|\vec{\beta}_l\right|^2 - \left(\vec{\beta}_k\cdot\vec{\beta}_l\right)^2}= \sum_{k,l=0}^{n} \frac{p_k p_l}{2\sqrt{2}}\left| \vec{\beta}_k \times \vec{\beta}_l \right|.
\end{eqnarray}
In the case of the state \eqref{eq:BlochVecm}, the purity $\gamma$ results:
\begin{eqnarray} \label{eq:purity}
\gamma[\rho]=\textrm{Tr}\! \left[\rho^2\right] = \frac{1}{2} \left(1+|\vec{\beta}_m|^2\right).
\end{eqnarray}
Now, we are in position of obtaining the generalized expressions for the distance based Holevo quantity $\mathcal{X}_d$ and $D(P||p\times q)$ for the cryptographic distinguishability measures considered in Sec. \ref{sec:QuantDist2}. We have omitted the probability of error (PE) because the expressions are totally analogous to the Kolmogorov notion (K).

\vspace{0.5cm}

For the \textit{Kolmogorov distance}
\begin{eqnarray}
\mathcal{X}_K=\frac{1}{2}\sum_i p_i \left|\vec{\beta}_i-\vec{\beta}_m\right|, \label{eq:Xk} \\
K(P||p\times q)=\frac{1}{2}\sum_i \left|(\vec{\beta}_i-\vec{\beta}_m)\cdot \vec{z}\right|,
\end{eqnarray}
and for the \textit{Bhattacharyya coefficient}
\begin{eqnarray}
&\mathcal{X}_B= \frac{1}{\sqrt{2}} \sum_i p_i  \sqrt{1+\vec{\beta}_i\cdot\vec{\beta}_m + \sqrt{\left(1-\left|\vec{\beta}_i\right|^2\right)\left(1-\left|\vec{\beta}_m\right|^2\right)}}, \label{eq:XB}\\
&B(P||p\times q)= \nonumber\\
&= \sum_i \frac{p_i}{\sqrt{2}}  \sqrt{1+(\vec{\beta}_i\cdot\vec{z})(\vec{\beta}_m\cdot\vec{z})  + \sqrt{\left[1-\left(\vec{\beta}_i\cdot \vec{z}\right)^2\right]\left[1-\left(\vec{\beta}_i\cdot \vec{z}\right)^2\right]}}. \label{eq:IB}
\end{eqnarray}

In the case of the \textit{Relative entropy}, the Holevo bound and $H_r(P||p\times q)$ take the form
\begin{eqnarray}
\mathcal{X}_{S_r}=\frac{1}{2} \left[ \sum_i p_i f\!\left(\left|\vec{\beta}_i\right|\right)\right] -\frac{1}{2} f\!\left(\left|\vec{\beta}_m\right|\right) \label{eq:Xs}, \\
H_r(P||p\times q)=\frac{1}{2} \left[ \sum_i p_i f\!\left(\left|\vec{\beta}_i\cdot\vec{z}\right|\right)\right] -\frac{1}{2} f\!\left(\left|\vec{\beta}_m\cdot\vec{z}\right|\right), \label{eq:Ih}
\end{eqnarray}
where $f(x)=(1+x)\log_2(1+x)+(1-x)\log_2(1-x)$ is minus the binary Shannon entropy.
\vspace{0.5cm}

Thus, once obtained the analytical expressions for the Kolmogorov notion, we can establish the following result:
\begin{theorem}
	For any two qubit ensemble (see \eqref{eq:BlochVec} and \eqref{eq:BlochVecm}, with $n=2$) it holds:
	\begin{eqnarray*}
		I_K(X,Y)=\max_{\mathcal{E}} K(P,p\times q)=\mathcal{X}_K.
	\end{eqnarray*}
\end{theorem}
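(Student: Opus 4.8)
The plan is to reduce both $\mathcal{X}_K$ and $K(P||p\times q)$ to scalar multiples of the single vector $\vec{\beta}_1-\vec{\beta}_2$, and then to recognize the maximization over von Neumann measurements as the elementary problem of maximizing a linear functional over the Bloch sphere.

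First I would exploit that $n=2$ with $p_1+p_2=1$: then $\vec{\beta}_m=p_1\vec{\beta}_1+p_2\vec{\beta}_2$, so that $\vec{\beta}_1-\vec{\beta}_m=p_2(\vec{\beta}_1-\vec{\beta}_2)$ and $\vec{\beta}_2-\vec{\beta}_m=-p_1(\vec{\beta}_1-\vec{\beta}_2)$; in particular each $\vec{\beta}_i-\vec{\beta}_m$ is collinear with $\vec{\beta}_1-\vec{\beta}_2$. Substituting these into \eqref{eq:Xk} gives
\begin{eqnarray*}
\mathcal{X}_K=\tfrac12\big(p_1p_2+p_2p_1\big)\,\big|\vec{\beta}_1-\vec{\beta}_2\big|=p_1p_2\,\big|\vec{\beta}_1-\vec{\beta}_2\big|.
\end{eqnarray*}
For $K(P||p\times q)$ I would go back to the definition $K(P||p\times q)=\tfrac12\sum_{ij}|P_{ij}-p_iq_j|$ and use the explicit forms of $P_{ij}$ and $p_iq_j$ of Sec. \ref{sec:QE}: since $P_{ij}-p_iq_j=\tfrac{p_i}{2}(-1)^j(\vec{\beta}_i-\vec{\beta}_m)\cdot\vec{z}$, summing over $j\in\{0,1\}$ and then over $i$ and invoking the same collinearity yields
\begin{eqnarray*}
K(P||p\times q)=\tfrac12\sum_i p_i\,\big|(\vec{\beta}_i-\vec{\beta}_m)\cdot\vec{z}\big|=p_1p_2\,\big|(\vec{\beta}_1-\vec{\beta}_2)\cdot\vec{z}\big|.
\end{eqnarray*}

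It then only remains to maximize over the measurement, i.e. over the direction $\vec{z}$. By Cauchy--Schwarz, $|(\vec{\beta}_1-\vec{\beta}_2)\cdot\vec{z}|\le|\vec{\beta}_1-\vec{\beta}_2|\,|\vec{z}|=|\vec{\beta}_1-\vec{\beta}_2|$ for every admissible $\vec{z}$ (recall that $E_0$ in \eqref{eq:BlochVecE} is a rank-one projector, hence $|\vec{z}|=1$), with equality for $\vec{z}=(\vec{\beta}_1-\vec{\beta}_2)/|\vec{\beta}_1-\vec{\beta}_2|$ when $\vec{\beta}_1\neq\vec{\beta}_2$ (the case $\vec{\beta}_1=\vec{\beta}_2$ being trivial, both sides vanishing). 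Since the remark following \eqref{eq:BlochVecE} guarantees that $\vec{z}$ overspreads the Bloch sphere, this optimal direction is realized by an actual measurement $\mathcal{E}$, so $\max_{\mathcal{E}}K(P||p\times q)=p_1p_2|\vec{\beta}_1-\vec{\beta}_2|=\mathcal{X}_K$, which is the claim. Note that $I_K(X,Y)\le\mathcal{X}_K$ is in any case guaranteed a priori by Theorem \ref{th:theorem} (via \eqref{eq:resultK}), so strictly speaking only the attainability of the bound needs to be argued.

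I do not expect a genuine obstacle here; the one point deserving care is the bookkeeping that lets the two weighted terms over $i$ merge, and it should be emphasized that this works \emph{precisely} because $n=2$ forces each $\vec{\beta}_i-\vec{\beta}_m$ to be collinear with $\vec{\beta}_1-\vec{\beta}_2$. For $n\ge3$ the vectors $\vec{\beta}_i-\vec{\beta}_m$ generically point in different directions, so no single $\vec{z}$ can be simultaneously optimal for all terms and the equality is not expected to survive.
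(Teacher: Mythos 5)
Your proof is correct and follows essentially the same route as the paper: exploit that for $n=2$ each $\vec{\beta}_i-\vec{\beta}_m$ is collinear with the single difference vector, reduce both quantities to $\hat p(1-\hat p)$ times $|\vec{\beta}_0-\vec{\beta}_1|$ or its projection onto $\vec{z}$, and attain equality at $\vec{z}=(\vec{\beta}_0-\vec{\beta}_1)/|\vec{\beta}_0-\vec{\beta}_1|$. Your version is in fact slightly tidier, since you make explicit (via Cauchy--Schwarz and the unit norm of $\vec{z}$) that this choice is a maximizer rather than relying only on the general bound of Theorem \ref{th:theorem}.
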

\begin{proof} By definition, we have
	\begin{eqnarray*}
		\vec{\beta}_m=\hat{p}\vec{\beta}_0+(1-\hat{p})\vec{\beta}_1, \\
		K(P,p\times q)=\frac{1}{2}\hat{p}\left|(\vec{\beta}_0-\vec{\beta}_m)\cdot \vec{z}\right|+\frac{1}{2}(1-\hat{p})\left|(\vec{\beta}_1-\vec{\beta}_m)\cdot \vec{z}\right|, \\
		\mathcal{X}_K=\frac{1}{2}\hat{p}\left|\vec{\beta}_0-\vec{\beta}_m\right|+\frac{1}{2}(1-\hat{p})\left|\vec{\beta}_1-\vec{\beta}_m\right|.
	\end{eqnarray*}
	Choosing $\vec{z}=\frac{\vec{\beta}}{|\vec{\beta}|}$   where $\vec{\beta}=\vec{\beta}_0-\vec{\beta}_1$ it follows
	\begin{eqnarray*}
		\mathcal{X}_K=\hat{p}(1-\hat{p})\vec{\beta}=K(P,p\times q)\rvert_{\vec{z}=\frac{\vec{\beta}}{|\vec{\beta}|}}.
	\end{eqnarray*}
	Therefore, $I_K=K(P,p\times q)\rvert_{\vec{z}=\frac{\vec{\beta}}{|\vec{\beta}|}}$.
\end{proof}
Finally, we will study the behavior of $I_d$ and $\mathcal{X}_d$ (for Kolmogorov and Bhattacharyya notions) in contrast with the known quantities $I_{\mathcal{S}_r}$ and $\mathcal{X}_{\mathcal{S}_r}$, using the non-commutativity measure $\mathcal{N}_c$ and the purity $\gamma$ as figure of merit for the ensemble properties.
\subsection{Example} \label{sec:Example}
\begin{figure}
	\centering 
	\includegraphics[width=0.8\columnwidth]{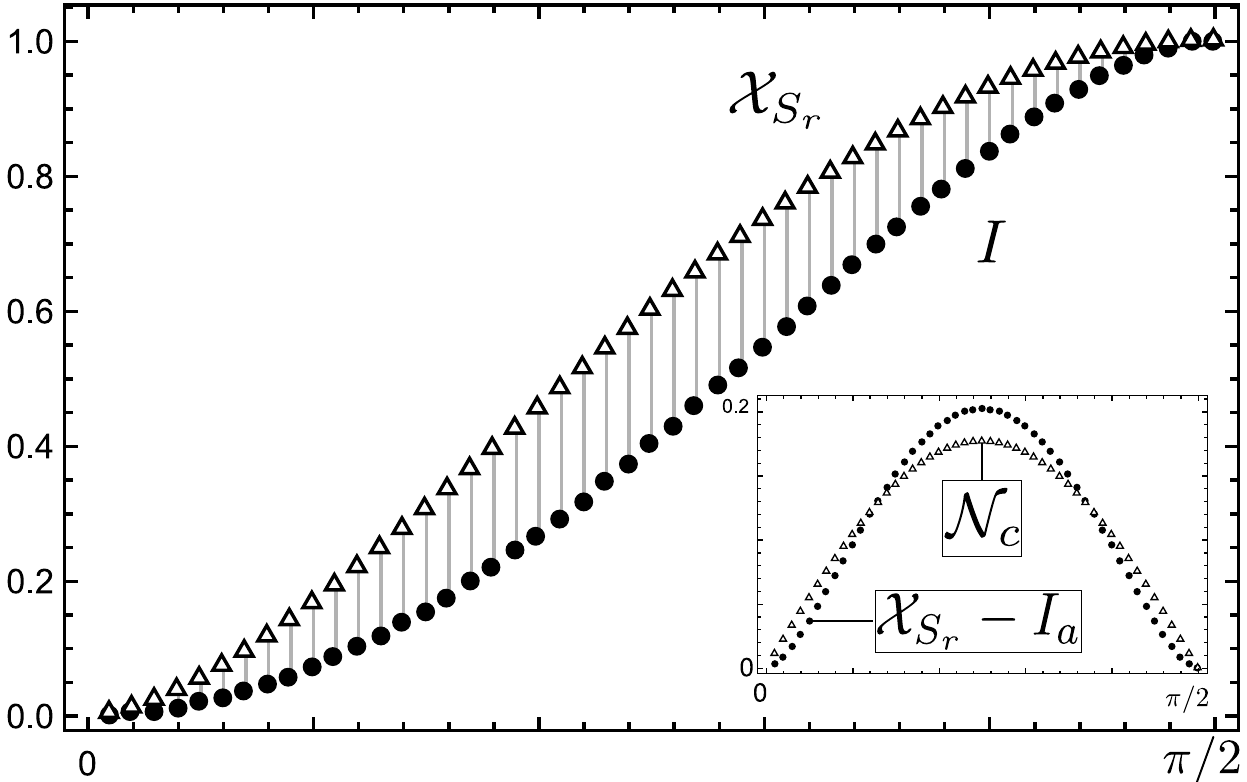}
	\caption[Fig1]{Comparative of $I$, $\mathcal{X}_{\mathcal{S}_r}$ and $\mathcal{N}_c$ for the ensemble \eqref{eq:ensemble0} and $\eqref{eq:ensemble1}$ and $\hat{p}=1/2$. All the quantities shown are dimensionless.} \label{fig:K}
\end{figure}
Consider an ensemble composed by two pure states $\{\rho_0,\rho_1\}$, being \cite{Nielsen&Chuang,ZeroError,ZycowskyLibro}
\begin{eqnarray}\label{eq:ensemble0}
\rho_0=
\begin{bmatrix}
1 & 0 \\
0 & 0 
\end{bmatrix},
\end{eqnarray}
\begin{eqnarray}\label{eq:ensemble1}
\rho_1=
\begin{bmatrix}
\cos^2\theta & \cos\theta\sin\theta \\
\cos\theta\sin\theta & \sin^2\theta 
\end{bmatrix},
\end{eqnarray}
with probabilities $p_0=\hat{p}$ and $p_1=1-\hat{p}$. For $\theta=0$, $\rho_0=\rho_1$ and for $\theta=\frac{\pi}{2}$, the states commute. The corresponding Bloch vectors are [cf. \eqref{eq:BlochVec} and \eqref{eq:BlochVecm}]
\begin{eqnarray}
\vec{\beta}_0&=(0,0,1), \label{eq:beta0}\\
\vec{\beta}_1&=\left(\sin 2\theta,0,\cos 2\theta\right), \label{eq:betai}\\
\vec{\beta}_m&=\left( (1-\hat{p})\sin 2\theta ,0, \hat{p} + (1-\hat{p})\cos 2\theta\right). \label{eq:betam}
\end{eqnarray}

\begin{figure}
	\centering 
	\includegraphics[width=0.8\columnwidth]{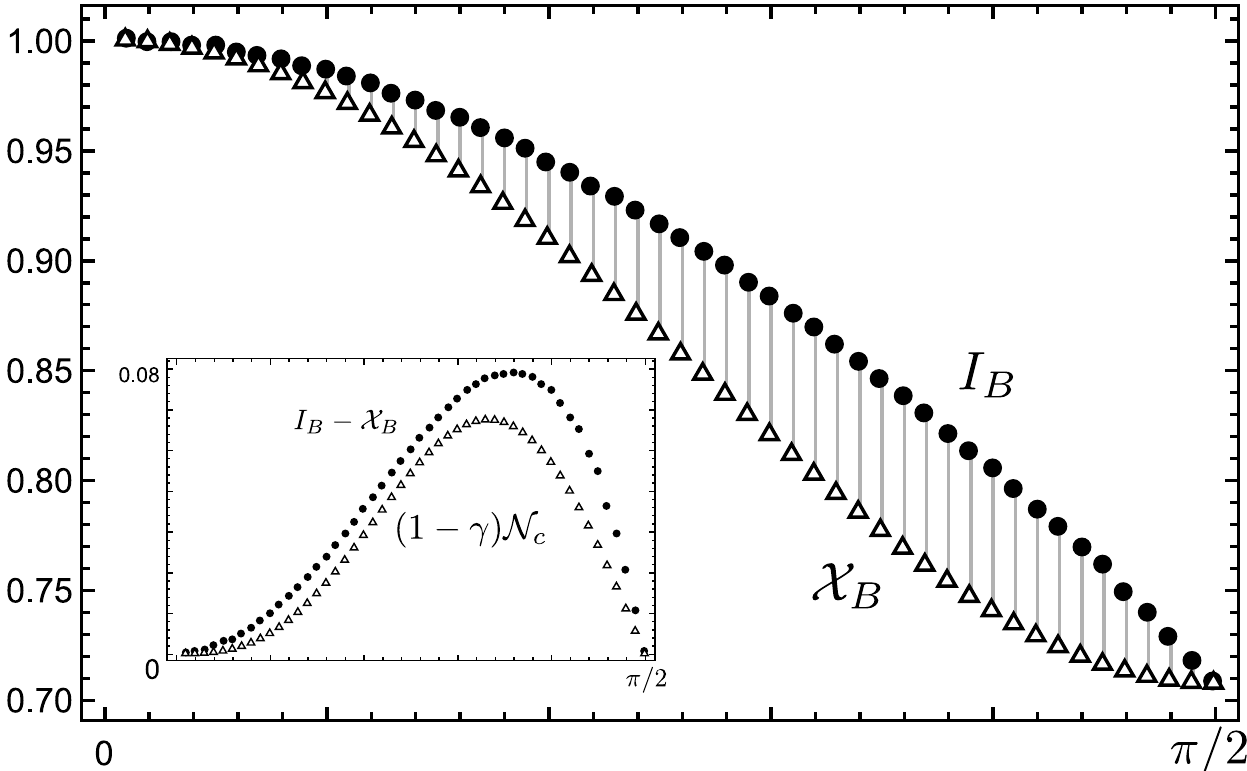}
	\caption[Fig2]{Comparative of $I_B$, $\mathcal{X}_{B}$, $\gamma$ and $\mathcal{N}_c$ for the ensemble \eqref{eq:ensemble0} and $\eqref{eq:ensemble1}$ and $\hat{p}=1/2$. All the quantities shown are dimensionless.} \label{fig:B}
\end{figure}
Consequently, the non-commutativity measure $\mathcal{N}_c$ and the purity $\gamma$ take the following analytical forms [cf. \eqref{eq:NC} and \eqref{eq:purity}]:
\begin{eqnarray*}
	\mathcal{N}_c = \frac{\hat{p}(1-\hat{p})}{\sqrt{2}}\left|\sin 2 \theta\right|,
\end{eqnarray*}

\begin{eqnarray*}
	\gamma[\rho] = 1-2\hat{p}(1-\hat{p}) (\sin \theta )^2.
\end{eqnarray*}
Additionally, the expression for $\mathcal{X}_K$ (also for $I_K$) is
\begin{eqnarray*}
	\mathcal{X}_K=2\hat{p}(1-\hat{p})|\sin \theta|.
\end{eqnarray*}

The remaining cases, i.e. $\mathcal{X}_d$ and $I_d$ for relative entropy $\mathcal{S}_r$ and Bhattacharyya coefficient $B$, constitute a more interesting case. We evaluate the generalized accessible information for each cases [cf. \eqref{eq:IB} and \eqref{eq:Ih}] (with $\hat{p}=\frac{1}{2}$) for $\theta\in[0,\frac{\pi}{2}]$. On the other hand, inserting \eqref{eq:beta0}-\eqref{eq:betam} in \eqref{eq:XB} and \eqref{eq:Xs} we can obtain the analytic expressions for the distance based Holevo quantity. 

Figure \ref{fig:K} shows the behavior of $\mathcal{X}_{\mathcal{S}_r}$ and $I_{\mathcal{S}_r}$. The difference between these quantities $\mathcal{X}_{\mathcal{S}_r}-I_{\mathcal{S}_r}$ increases and decreases as the non-commutativity measure $\mathcal{N}_c$ does.

Figure \ref{fig:B} displays $\mathcal{X}_{B}$ and $I_B$. In this case, the difference $I_B-\mathcal{X}_{B}$ does not have the same behavior than $\mathcal{N}_c$. By plotting $(1-\gamma)\mathcal{N}_c$ it is straightforward to conclude that $I_B-\mathcal{X}_{B}$ also contains information about the purity $\gamma$ of the ensemble.

\section{Concluding remarks\label{sec:conclusions}}

In this work, we have proposed a generalization of the Holevo theorem by means of distance measures, focusing our study in cryptographic notions of distinguishability. In particular, we have obtained the corresponding new inequalities to the notions of Kolmogorov, Bhattacharyya coefficient and probability of error. To explore the behaviors of the generalized quantities (DBHQ and GAI) in these three cases, we have calculated the corresponding analytical expressions for the qubit case, using the von Neumann measurements, proving that DBHQ and GAI are equal for the Kolmogorov notion of distinguishability ($I_K=\mathcal{X}_K$) for any qubit ensemble of two elements. We have also obtained the analytical expressions for the non-commutativity (measured by $\mathcal{N}_c$) and purity (measured by $\gamma$) of the ensemble as a function of the Bloch vectors of the states composing the ensemble. Finally, we have considered an ensemble of two pure states in which we have numerically computed the accessible information $I_a$ and GAI for the Bhattacharyya notion of distinguishability. By using the non-commutativity  and the purity as figure of merit of the ensemble properties, we found that that:  1) $\mathcal{X}_{\mathcal{S}_r}-I_{\mathcal{S}_r}$ increase and decrease if the non-commutativity does and 2) the difference of the generalized quantities for the Bhattacharyya notion, i.e. $I_B-\mathcal{X}_B$, captures not only the non-commutativity of the ensemble but also the purity showing a richer behavior that the relative entropy case.

It would be further interesting to deepen the study of GAI and DBHQ meaning, the tightness of the bounds and his operational uses, for different distance measures and different distinguishability notions. 

\begin{acknowledgements}
The authors acknowledge Secyt-UNC-Argentina for financial support. D. G. Bussandri is a fellowship holder from CONICET.
\end{acknowledgements}

\section*{References}
{}
\end{document}